\def\BibTeX{{\rm B\kern-.05em{\sc i\kern-.025em b}\kern-.08em
    T\kern-.1667em\lower.7ex\hbox{E}\kern-.125emX}}
\newtheorem{theorem}{Theorem}
\newtheorem{definition}{Definition}
\newtheorem{example}{Example}
\begin{document}

\title{\LARGE \textbf{Control Barrier Function Contracts for Vehicular Mission Planning under Signal Temporal Logic Specifications} 
\thanks{$^{1}$Muhammad Waqas (\texttt{waqas@usc.edu}), Nikhil Vijay Naik (\texttt{nikhilvn@usc.edu}),  Petros Ioannou (\texttt{ioannou@usc.edu}), and Pierluigi Nuzzo (\texttt{nuzzo@usc.edu}) are affiliated with the Ming Hsieh Department of Electrical and Computer Engineering, University of Southern California, Los Angeles. 
This research was supported in part by the National Science Foundation (NSF) under Awards 1839842, 1846524, and 2139982, the Office of Naval Research (ONR) under Award N00014-20-1-2258, and the Defense Advanced Research Projects Agency (DARPA) under Award HR00112010003. 
}
}

\author{Muhammad Waqas$^{1}$, Nikhil Vijay Naik$^{1}$, Petros Ioannou$^{1}$, and Pierluigi Nuzzo$^{1}$}

\maketitle

\begin{abstract}
We present a compositional control synthesis method based on assume-guarantee contracts with application to correct-by-construction design of vehicular mission plans. In our approach, a mission-level specification expressed in a fragment of signal temporal logic (STL) is decomposed into formulas whose predicates are defined on {non-overlapping} time intervals. The STL formulas are then mapped to aggregations of contracts associated with continuously differentiable time-varying control barrier functions. The barrier functions are used to constrain the lower-level control synthesis problem, which is solved via quadratic programming. Our approach can mitigate the conservatism of previous methods for task-driven control based on under-approximations. We illustrate its effectiveness on a case study motivated by vehicular mission planning under safety constraints as well as constraints imposed by traffic regulations under vehicle-to-vehicle and vehicle-to-infrastructure communication. 
\end{abstract}
\begin{IEEEkeywords}
Signal temporal logic, control synthesis, contract-based design, control barrier functions.
\end{IEEEkeywords}

\section{Introduction}\label{sec:Intro}

The necessity of ensuring mission safety of autonomous cyber-physical systems such as vehicles immersed in an urban setting~\cite{shalev2017formal} has motivated the development of correct-by-construction, algorithmic control synthesis methods (see, e.g., \cite{ames2019control,tabuada2006linear}) 
to  help ensure that a system fulfills its mission requirements while avoiding potentially hazardous configurations. 

A major challenge to control synthesis stems from the heterogeneity of formalisms needed to design and analyze complex cyber-physical systems~\cite{shoukry2018smc}. Some of the efforts in the literature leverage symbolic approaches to effectively synthesize provably correct high-level task planners. However, by relying on discrete abstractions of the design space, these methods may be prone to scalability issues when applied to complex continuous systems. On the other hand, low-level feedback control synthesis methods have shown to be effective in enforcing invariance and simple reachability properties on continuous systems. They have, however, difficulty in capturing more complicated mission constraints, including logical constraints, often inducing discontinuities in the target safe sets. 
More recently, the representation of the mission specification in an expressive logic language, such as signal temporal logic (STL), 
together with mixed integer linear encodings of the STL formulas~\cite{raman2014model} have been proposed to perform discrete-time trajectory planning in a model predictive control fashion for a wider class of objectives, including time-sensitive constraints. However, efficiently encompassing mission-level (logical) and control-level (dynamical) constraints within a unifying framework remains a challenge.

Compositional and hierarchical methods show the promise of harnessing the complexity due to the scale and heterogeneity of the control design problem, e.g., via a layered approach that can capture different kinds of constraints at different layers, without inducing excessive conservatism in the solutions.  
In this context, assume-guarantee (A/G) contracts have been  employed~\cite{nuzzo2013contract, nuzzo2015platform} to support 
compositional synthesis under temporal logic specifications. A/G reasoning has also been explored to argue about the correct composition of lane keeping and cruise control for vehicular planning~\cite{xu2017correctness}.   
However, an A/G contract framework that can effectively bridge high-level planning and continuous-time feedback control 
is an open research problem.

This paper addresses the above challenges by exploring a formalization of control barrier functions~\cite{ames2019control} in terms of A/G contracts capable of bridging high-level task planning and low-level feedback control. Central to our approach is the characterization of time-varying safe sets via a composition of continuously differentiable time-varying control barrier function ($\mathcal{C}^1$ TV-CBF) contracts that can capture time-varying constraints including jump discontinuities. Our contributions can be summarized as follows:
 
\begin{itemize}

\item We formalize a notion of \emph{time-varying finite-time convergence control barrier function} (TV-FCBF) as a contract providing an effective interface between task planning and feedback control synthesis. 

\item We determine necessary and sufficient conditions for the composition of TV-CBF contracts to generate a compatible contract, for which a controller is guaranteed to exist. 

\item By building on these abstractions, we introduce an algorithm that maps a mission-level specification expressed in a fragment of STL to an aggregation of CBF contracts from which a feedback controller can be  designed via quadratic programming. 
\end{itemize}

Our method is reminiscent of previous approaches to STL control synthesis using 
CBFs~\cite{lindemann2018control}, in that we associate candidate CBFs with atomic STL predicates in the specification. However, our approach can mitigate the potential conservatism induced by previous methods, based on concatenating multiple CBFs via a pointwise minimum operator and approximating the result via a smooth function, 
which may lead to overly defensive behaviors. 

Our synthesis algorithm is also inspired by funnel-based control synthesis~\cite{majumdar2017funnel}, where funnels associated with controllers from a predefined library are sequentially composed. 
\begin{figure}[t]
        \centering \includegraphics[width=7cm]{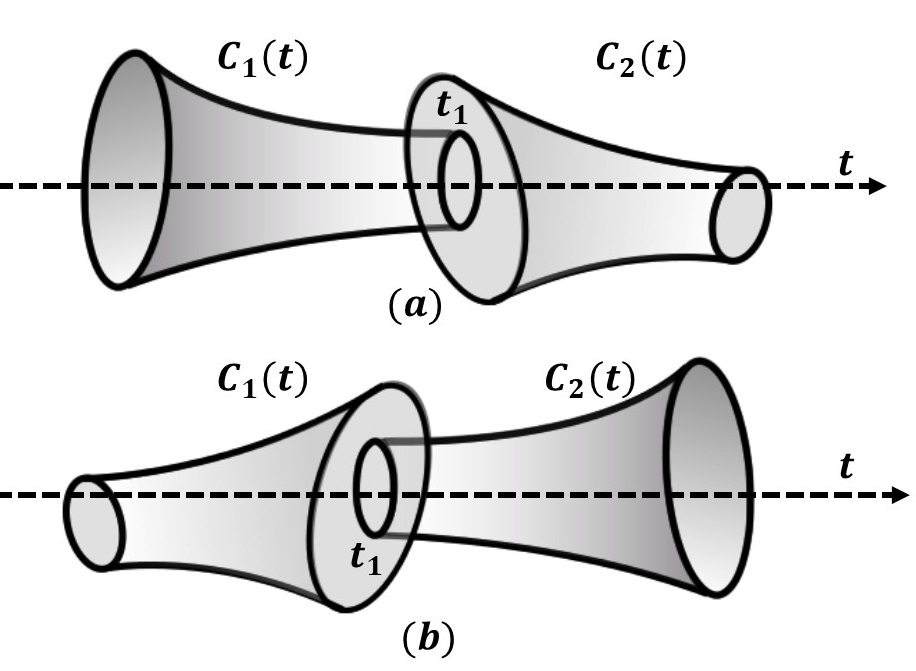}
          \caption{\small In a funnel-based approach  funnels are composable at time $t=t_1$ only under scenario (a) $C_1(t_1) \subseteq C_2(t_1)$. Our approach is based on the composition of time-varying safe sets expressed by contracts. Contracts compatibility requires $C_1(t_1) \cap C_2(t_1) \neq \emptyset$, as in scenario (b).} 
          \label{fig:funnel_analogy}
 \end{figure}
In a funnel-based approach, the current funnel is required to be a subset of the upcoming funnel at the time of switching (see Fig.~\ref{fig:funnel_analogy}). Our approach is, instead, based on the composition of time-varying safe sets, without \emph{a priori} constraining the architecture of the controller that will be  engaged. We can then relax the requirement that the current set be a subset of the upcoming one by relying on a time-varying version of a finite-time convergence CBF contract~\cite{li2018formally}. We illustrate the effectiveness of the proposed approach on a case study of vehicular motion planning under safety and regulatory constraints like traffic signals and variable speed limit under vehicle-to-vehicle (V2V) and vehicle-to-infrastructure (V2I) communication.

\section{Background}\label{sec:Background}

\subsection{A/G Contracts}\label{subsec:ContractBasedDesign}

A \emph{contract} $\mathscr{C}$ for a component $M$ is a triple $(V, A, G)$, where $V$ is a set of variables, and $A$ and $G$ are sets of behaviors over $V$. $A$, termed the \emph{assumptions}, encode the assumptions made by $M$ on its operational environment. $G$ is the set of \emph{guarantees}, i.e., the collection of behaviors promised by $M$ provided that the environment satisfies $A$. We say that $M$ \emph{satisfies} 
$\mathscr{C}$ when all the behaviors of ${M}$ satisfying  
${A}$ are contained in
${G}$. A contract is \emph{consistent} if there exists a valid implementation ${M}$, i.e., 
$G \cup \bar{A}$ is nonempty. It is said to be \emph{compatible}, if there exists a valid environment ${E}$, i.e., ${A}$ is nonempty. 
We can compare two contracts $\mathscr{C}_1$ and $\mathscr{C}_2$ through the \emph{refinement} operation, which is a preorder on contracts. We say that $\mathscr{C}_1$ refines $\mathscr{C}_2$ and write $\mathscr{C}_1 \preceq \mathscr{C}_2$ if and only if $\mathscr{C}_1$ has weaker assumptions and stronger guarantees. The \emph{conjunction} of two contracts $\mathscr{C}_1$ and $\mathscr{C}_2$ is defined as the contract serving as the greatest lower bound which refines both. This can be used to represent a combination of requirements that must be satisfied simultaneously. The \emph{composition} of contracts is, instead, used to derive a more complex contract that must be satisfied by a composition of components, each satisfying its local contract. 
A detailed exposition of all the terms summarized above may be found in the literature~\cite{benveniste2012contracts}. 

\subsection{Control Barrier Functions}

 We assume that a dynamical system, e.g., describing the ego vehicle, is governed by the dynamics 
 \begin{equation}\label{eq:DynamicalSystem}
        \dot{\mathbf{x}} = f(\mathbf{x}) + g(\mathbf{x}) u, 
    \end{equation}
 where $f, g$ are locally Lipschitz-continuous functions of the system states $\mathbf{x} \in D \subseteq \mathbb{R}^n$, $u \in U \subseteq \mathbb{R}^m$ is the input vector, and $U$ is the set of allowable inputs. 
 CBFs are used to provide safety guarantees for such systems.  
 \begin{definition}[Control Barrier Function~\cite{ames2019control}] 
 \label{def:ControlBarrierFunctions}
 Let $h(\mathbf{x}): \mathbb{R}^n \rightarrow \mathbb{R}$  be a continuously differentiable function and let $C \subseteq D \subseteq \mathbb{R}^n$ be a compact superlevel set of $h(\mathbf{x})$ such that $C = \{\mathbf{x}\in D: h(\mathbf{x}) \geq 0\}$. We say that $h$ is a \emph{Control Barrier Function} (CBF) if there exists an extended $\mathcal{K}_\infty$ class function $\alpha$ such that, for all $\mathbf{x} \in D$, the following holds:
\begin{equation}\label{eq:controlBarrierDissipativity}
    \sup_{u \in U} \: [\:L_f h(\mathbf{x}) + L_g h(\mathbf{x}) u\:] \: \ge -\alpha(h(\mathbf{x})).
\end{equation}
$L_fh(\mathbf{x}) = \frac{d h(\mathbf{x})}{d t} f(\mathbf{x})$ and $L_gh(\mathbf{x}) = \frac{d h(\mathbf{x})}{d t} g(\mathbf{x})$ are the appropriate Lie derivatives. $C$ is the safe set corresponding to the CBF~\cite{ames2019control}. 
\end{definition}
  It can be proven~\cite{ames2016control,ames2019control} that any controller $u\in \mathcal{U}_{safe}=\{u\in U: L_f h(\mathbf{x}) + L_gh(\mathbf{x})u \geq -\alpha (h(\mathbf{x}))\}$ ensures that, if the system starts in $C$, i.e., $\mathbf{x}(t_0) \in C$, then it will stay in $C$. 
  The existence of a CBF is then equivalent to ensuring the forward-invariance property of the safe set $C$, hence rendering the system evolution safe, given safety conditions on its initial states. 
  A notion of finite-time convergence CBF (FCBF) has also been proposed for time-invariant CBFs~\cite{li2018formally} 
  to guarantee finite-time convergence to a safe set. In this paper, we extend the concept of FCBF to time-varying CBF 
  and formalize them as A/G contracts.

\subsection{Signal Temporal Logic}\label{sec:STLContracts}
We represent the mission specification using STL~\cite{maler2004monitoring}, which offers a rigorous formalism for the specification and analysis of temporal properties of real-valued dense-time signals. %
 We assume that an STL atomic predicate $\phi_h$ is evaluated over a real-valued \emph{predicate function} $h(\mathbf{x}):\mathbb{R}^n \to \mathbb{R}$. $\phi_h$ evaluates to \emph{true} ($\top$) if $h(\mathbf{x}) \ge 0$ holds, and \emph{false} ($\bot$) otherwise. 
 We then consider a fragment of STL according to the following syntax:
  \begin{equation}
        \label{eq:STLFragment}
        \psi := \top \:|\: \phi_h \:|\: \neg \phi_h \:|\: \mathcal{G}_{\Gamma}\phi_h \:|\: \mathcal{F}_{\Gamma} \phi_h \:|\: \neg \psi \:|\: \psi_1 \wedge \psi_2 \:
    \end{equation}
    where $\phi_h$ is an atomic predicate, $\psi, \psi_1, \psi_2$ are STL formulas.
$\mathcal{G}$ and $\mathcal{F}$ 
are the \emph{globally} and  \emph{eventually} 
temporal operators, respectively, and $\Gamma$
is a bounded time interval. Our fragment does not include the nesting of temporal operators. 
We say that $(\mathbf{x}, t)$ satisfies $\psi$, written $(\mathbf{x}, t) \models \psi$, if there exists a signal (trajectory) $\mathbf{x}(t) \in \mathbb{R}^n$ such that $\psi$ holds at time $t$. We simply write $\mathbf{x} \models \psi$ if $(\mathbf{x}, 0) \models \psi$.
%

\section{Control Barrier Function A/G Contracts}\label{sec:STLtoCBFs}

We begin by extending classical results from finite-time convergence CBFs~\cite{li2018formally} to a new class of CBFs, which we call time-varying finite-time convergence CBFs (FCBFs). We show that FCBFs can be formalized as A/G contracts. Their composition leads, in general, to piecewise continuously differentiable ($\mathcal{C}^1$) TV-CBF contracts for which a controller is guaranteed to exist.

\begin{definition}[Time-Varying Finite-Time Convergence CBF (TV-FCBF)] \label{def: TV-FCBF-1} 

Let $h(t,\mathbf{x}): \mathbb{R}_{\geq0} \times  \mathbb{R}^n \to \mathbb{R}$ be a continuously differentiable function. Let $C(t) \subseteq D \subseteq \mathbb{R}^n$ be the compact superlevel set of $h(t,\mathbf{x})$. 
If for the system in~(\ref{eq:DynamicalSystem}) there exist $0 \leq\rho <1$ and $\gamma>0$ such that, $\forall \ t\geq 0, \forall \ \mathbf{x}\in D$, we have 
\begin{equation*} 
\sup\limits_{u\in U}\left [ \begin{split}
\frac{\partial}{\partial t}h(t,\mathbf{x})+ L_{f}h(t,\mathbf{x})+L_{g}h(t,\mathbf{x})u+ \\
+ \gamma \mathrm{sign} (h(t,\mathbf{x})) \vert h(t,\mathbf{x})\vert ^{\rho}
\end{split} \right ] \geq 0,\end{equation*}
then $h(t,\mathbf{x})$ is a Time-Varying Finite-Time Convergence CBF.
\end{definition}

When $h(t,\mathbf{x})=h(\mathbf{x})$ the TV-FCBF reduces to an FCBF~\cite{li2018formally}.  For a given $h(t,\mathbf{x})$, the set of safe inputs is 
\begin{align*} 
\label{eq: K_TV-FCBF-1}
\mathcal{U}_{safe}(t) &= \{u\in U: \frac{\partial}{\partial t}h(t,\mathbf{x}) + L_{f}h(t,\mathbf{x})+ L_{g}h(t,\mathbf{x})u +\\ &+ \gamma \mathrm{sign} (h(t,\mathbf{x})) \vert h(t,\mathbf{x})\vert ^{\rho}\geq 0\}.
\end{align*}
The following theorem discusses the finite-time convergence property of a TV-FCBF.  

\begin{theorem}\label{thm:TVCBF-timebound}

Let $C(t)$ be the superlevel set of the continuously differentiable function $h(t,\mathbf{x}): \mathbb{R}_{\geq 0}\times \mathbb{R}^n \rightarrow \mathbb{R}$ as in Definition~\ref{def: TV-FCBF-1}, with corresponding $0 \leq\rho< 1$ and $\gamma>0$, and let $\mathbf{x}(t_0)=\mathbf{x}_0 \in C(t_0)$ be the initial state, any controller $u \in \mathcal{U}_{safe}(t)$ renders  $C(t)$ forward-invariant $\forall t>t_0$. Moreover, if $\mathbf{x}_0 \in D \setminus C(t_0)$, then $u \in \mathcal{U}_{safe}(t)$ drives $\mathbf{x}$ to $C(t)$ within a finite time $T=\frac{1}{\gamma (1-\rho)} \lvert h(t_0,\mathbf{x_0}) \rvert ^{1-\rho}$. 
\end{theorem}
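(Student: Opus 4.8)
The plan is to track the value of the barrier function along a closed-loop trajectory and reduce both claims to a single scalar differential inequality that can be handled by a comparison argument. Concretely, I would fix a controller $u \in \mathcal{U}_{safe}(t)$, let $\mathbf{x}(t)$ be the resulting trajectory of~(\ref{eq:DynamicalSystem}), and define the scalar signal $\eta(t) := h(t,\mathbf{x}(t))$. Differentiating along the trajectory by the chain rule gives
\[
\dot{\eta}(t) = \frac{\partial}{\partial t}h(t,\mathbf{x}(t)) + L_f h(t,\mathbf{x}(t)) + L_g h(t,\mathbf{x}(t))\,u ,
\]
so membership $u \in \mathcal{U}_{safe}(t)$ is precisely the differential inequality
\[
\dot{\eta}(t) \;\geq\; -\gamma\,\mathrm{sign}(\eta(t))\,\lvert \eta(t)\rvert^{\rho}.
\]
Everything then follows from comparing $\eta$ with the scalar ODE $\dot{y} = -\gamma\,\mathrm{sign}(y)\,\lvert y\rvert^{\rho}$, with $C(t)$ characterized by $\eta(t)\geq 0$.

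For the forward-invariance claim I would start from $\eta(t_0)=h(t_0,\mathbf{x}_0)\geq 0$ and argue that the inequality prevents $\eta$ from becoming negative. The clean way is a boundary (Nagumo-type) argument: on the set $\eta=0$ we have $\mathrm{sign}(0)=0$, so the inequality yields $\dot{\eta}\geq 0$, i.e., the closed-loop vector field cannot push $\eta$ below zero. Formally, I would invoke the comparison principle to obtain $\eta(t)\geq y(t)$, where $y$ solves the comparison ODE with $y(t_0)=\eta(t_0)\geq 0$; since the nonnegative solution of that ODE decays toward $0$ without crossing it, we conclude $\eta(t)\geq y(t)\geq 0$, hence $\mathbf{x}(t)\in C(t)$ for all $t>t_0$.

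For the finite-time reaching claim I would start from $\eta(t_0)<0$ and work on the maximal interval during which $\eta$ stays negative, where $\mathrm{sign}(\eta)\equiv -1$ and the inequality becomes $\dot{\eta}\geq \gamma\lvert\eta\rvert^{\rho}$. Setting $\zeta := -\eta > 0$ turns this into $\dot{\zeta}\leq -\gamma\,\zeta^{\rho}$, so $\zeta$ is a subsolution of $\dot{w}=-\gamma w^{\rho}$ with $w(t_0)=\lvert h(t_0,\mathbf{x}_0)\rvert$. Separating variables and integrating gives $w(t)^{1-\rho}=\lvert h(t_0,\mathbf{x}_0)\rvert^{1-\rho}-\gamma(1-\rho)(t-t_0)$, which crosses zero exactly at the stated bound $T=\tfrac{1}{\gamma(1-\rho)}\lvert h(t_0,\mathbf{x}_0)\rvert^{1-\rho}$. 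Since $\zeta(t)\leq w(t)$, the value $\eta$ is forced up to $0$ (equivalently $\mathbf{x}$ enters $C(t)$) no later than $T$, and combining with the first part shows $\mathbf{x}$ remains in $C(t)$ thereafter.

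The main obstacle is the lack of Lipschitz continuity of the right-hand side $-\gamma\,\mathrm{sign}(y)\lvert y\rvert^{\rho}$ at the origin: this is precisely the feature responsible for finite-time (rather than merely asymptotic) convergence, but it destroys uniqueness of the comparison ODE and violates the standard hypotheses of the comparison lemma. I would handle this by applying the comparison argument only on maximal intervals where $\eta$ has constant sign—so that the right-hand side is locally Lipschitz there—and treating the rest at, and crossing of, $\eta=0$ separately through the boundary inequality $\dot{\eta}\geq 0$. A secondary point to state carefully is that the closed-loop trajectory is well defined on the relevant interval, which follows from local Lipschitz continuity of $f,g$ together with $h\in\mathcal{C}^1$.
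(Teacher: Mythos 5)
Your proof is correct and takes essentially the same route as the paper's: the paper tracks the Lyapunov-like function $V(t,\mathbf{x})=\max\left(0,-h(t,\mathbf{x})\right)$ --- which is exactly your $\zeta=-\eta$ on the region where the state lies outside $C(t)$, and zero otherwise --- and applies the comparison lemma twice, once for forward invariance and once for finite-time convergence to obtain the same bound $T=\frac{1}{\gamma(1-\rho)}\lvert h(t_0,\mathbf{x}_0)\rvert^{1-\rho}$. The only difference is one of detail: you spell out the scalar differential inequality, the explicit integration yielding $T$, and the handling of the non-Lipschitz right-hand side at the origin, all of which the paper compresses by citing the comparison lemma from the prior finite-time CBF literature.
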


\begin{proof}
The proof follows the same line of reasoning as previous results~\cite{li2018formally}. Consider the candidate Lyapunov function $V(t,\mathbf{x}) = \max \left(0,- h(t,\mathbf{x})\right)$. When $\mathbf{x}_0 \in C(t_0)$, then $h(t_0,\mathbf{x})\geq 0$ and $V(t,\mathbf{x})=0$ hold. 
By the comparison lemma~\cite{bhat2000finite}, we obtain $V(t,\mathbf{x})=0$ for all $t>t_0$, hence  $\mathbf{x}(t) \in C(t)$ $\forall t>t_0$. 
When $\mathbf{x}_0\in D \setminus C(t_0)$, then we have $V(t_0,\mathbf{x}_0)>0$ and $\dot{V}(t,\mathbf{x}) \leq -\gamma V^{\rho} (t,\mathbf{x})$ $\forall t \geq t_0$. 
Again, by the comparison lemma, $\mathbf{x}$ will converge to $C(t)$ within the finite time $T=\frac{1}{\gamma (1-\rho)} \lvert h(t_0,\mathbf{x_0}) \rvert ^{1-\rho}$, i.e., $\mathbf{x}(t) \in C(t)$ $\forall t\geq t_0+T$. 
\end{proof}

We also say that $C_h(t) = \{\mathbf{x}\in D: h(t,\mathbf{x})\geq 0\}$ is the safe set for the barrier function $h(t, \mathbf{x})$. 
TV-FCBFs can mitigate the conservatism of concatenating multiple CBFs via a $\min$ operator and taking a smooth approximation of the result, as discussed in the following example. 

\begin{example}

For a simple vehicle model, $\dot{\mathbf{x}}=\begin{bmatrix}0 & 1 \\0  & 0  \end{bmatrix} \mathbf{x}+ \begin{bmatrix}0 \\1 \end{bmatrix}u$, where $u\in \mathbb{R}$ is the input acceleration and $\mathbf{x} = \begin{bmatrix}X & V  \end{bmatrix}^T$, with $X \in\mathbb{R}$ and $V \in \mathbb{R}_{\geq 0} $ being the position and the velocity of the vehicle, let us consider the STL formula  $\bigwedge_{i \geq 1}^{N} \mathcal{G}_{[t_{i-1},t_i)}\phi_i$, 
where the predicates $\phi_i := V_{max,i}-V \geq 0$ are defined over $N$ non-overlapping adjacent intervals, that is,  $\Gamma_i=[t_{i-1},t_i)$ and $\Gamma_i \cap \Gamma_{i+1} =\emptyset$ for $1 \leq i \leq N$. 
The formula $\phi_i$ prescribes a maximum speed limit during $\Gamma_i$. 

A method to synthesize a controller satisfying this formula could 
{associate a CBF $h_i(\mathbf{x}):=V_{max,i}-V$ with each predicate $\phi_i$ and combine them via a smooth under-approximation of the pointwise $\min$ operator~\cite{lindemann2018control}}. 
However, this procedure would result into an overly conservative requirement, always yielding a speed limit that is stricter than $\underset{1 \leq i \leq N}{\min}(V_{max,i}-V)$. 
We show that a combination of TV-CBF and TV-FCBF formalized as contracts can mitigate this conservatism.
\end{example}

The notion of invariance can be formalized as a \emph{contract} to be satisfied by the closed-loop system. 
The contract algebra can then be used to reason about the composition of invariance properties and derive conditions for control synthesis.

\begin{definition}[TV-CBF Contract]\label{def:CBFContract}
A TV-CBF contract $\mathscr{C}_h = (V, A_h, G_h)$ is defined as follows:
\begin{equation}\label{eq:CBFContract}
    \begin{split}
     & {V} = \{ t, 
      \mathbf{x} = (x_1, \ldots, x_n)\},\\
     & {A_h} :=  \mathbf{x}(t_{0}) \in C_h(t_{0}),\\
     & {G_h} :=\forall \ t > t_0: \mathbf{x}(t) \in C_h(t),
    \end{split}
\end{equation}
where $C_h(t)$ is the safe set of $h(t, \mathbf{x})$ and $t_0$ is the initial time.
\end{definition}

\begin{definition}[TV-FCBF Contract]\label{def:TV-FCBF-Contract}
A TV-FCBF contract $\mathscr{C}_h = (V, A_h, G_h)$ is defined as follows: 
\begin{equation*}\label{eq:CBFContract}
    \begin{aligned}
    & A_h :=  \exists \ t_{conv} > 0 : \forall \ t \in [t_0, t_0 + t_{conv}): \mathbf{x}(t) \notin C_h(t) \: \wedge \\
    & \:\:\: \mathbf{x}(t_0 + t_{conv})\in C_h(t_0 + t_{conv}),\\
      & G_h :=\forall \ t \geq t_{conv} + t_0 : \mathbf{x}(t) \in C_h(t).
    \end{aligned}
\end{equation*}
\end{definition}

If a CBF contract holds, then there exists a controller which ensures the forward-invariance of $C_h(t)$ for the closed-loop system. Else, if a FCBF contract holds, there exists a controller which brings the system to the safe set $C_h(t)$ after time $t_{conv}$. 
We say that contract $\mathscr{C}_h$ 
is imposed on the interval $\Gamma = [t_0, t_f) $, when $t_0$ is the initial time and $C_h(t)$ is the safe set for all $t \in \Gamma$. Let $C_h(t_1^-)=\{\mathbf{x}:\lim_{t\to t_1^-}h(t,\mathbf{x})\geq 0\}$. In the following, we also assume that contracts are non-vacuous, that is, their assumptions are satisfiable. 
Using the above notions, we can combine contracts over neighboring intervals as stated by the following results.

\begin{theorem}\label{thm:CompositionofCBFs}
The composition of the TV-CBF contracts $\mathscr{C}_{h} \otimes \mathscr{C}_g$ imposed on $\Gamma_1 = [t_0, t_1)$ and $\Gamma_2 = [t_1, t_2)$, respectively, with $t_2>t_1>t_0$, is compatible 
if and only if $C_h(t_1^-) \cap C_g(t_1) \neq \emptyset$. Moreover,  $\mathscr{C}_h \otimes \mathscr{C}_g$  is compatible for all  $\mathbf{x}(t_0)\in C_h(t_0)$  if and only if $C_h(t_1^-) \subseteq C_g(t_1)$ holds.  
\end{theorem}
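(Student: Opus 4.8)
The plan is to unfold the definition of contract compatibility---nonemptiness of the composite assumption set---and translate it into the geometric condition on the safe sets at the switching instant $t_1$. First I would recall the composition operator $\otimes$: the composite $\mathscr{C}_h \otimes \mathscr{C}_g$ guarantees $G_h \cap G_g$ and carries the weakest assumption under which the guarantee of $\mathscr{C}_h$ on $\Gamma_1$ discharges the assumption $A_g$ of $\mathscr{C}_g$ at the handoff $t_1$. Since behaviors are trajectories of the dynamics in~(\ref{eq:DynamicalSystem}), the state is continuous, so the state entering $\Gamma_2$ satisfies $\mathbf{x}(t_1) = \lim_{t \to t_1^-}\mathbf{x}(t)$. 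Consequently $G_h$ forces $\mathbf{x}(t_1) \in C_h(t_1^-)$ while $A_g$ requires $\mathbf{x}(t_1) \in C_g(t_1)$, so a valid environment behavior for the composite exists precisely when a handoff state lying in \emph{both} sets exists.

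For the first claim I would argue both directions around this observation. For necessity, I would assume compatibility, so the composite assumption set is nonempty: there is a behavior in which $\mathbf{x}(t_0) \in C_h(t_0)$, the guarantee $G_h$ holds on $\Gamma_1$, and $A_g$ holds at $t_1$; evaluating that behavior at $t_1$ exhibits a point of $C_h(t_1^-) \cap C_g(t_1)$, so the intersection is nonempty. For sufficiency, I would fix any $\mathbf{x}^\ast \in C_h(t_1^-) \cap C_g(t_1)$ and invoke Theorem~\ref{thm:TVCBF-timebound}: since $h$ is a TV-CBF, a controller in $\mathcal{U}_{safe}(t)$ renders $C_h(t)$ forward invariant on $\Gamma_1$, so there is a trajectory starting in $C_h(t_0)$ that stays in $C_h(t)$ and reaches $\mathbf{x}^\ast$ at $t_1^-$; this trajectory witnesses a nonempty composite assumption, i.e. compatibility.

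For the second claim---compatibility for \emph{every} $\mathbf{x}(t_0) \in C_h(t_0)$---I would quantify the handoff condition over all admissible behaviors. For sufficiency, if $C_h(t_1^-) \subseteq C_g(t_1)$, then for any initial state in $C_h(t_0)$ the guarantee $G_h$ places $\mathbf{x}(t_1) \in C_h(t_1^-) \subseteq C_g(t_1)$, so $A_g$ is discharged automatically and the composite assumption reduces to the satisfiable assumption $A_h$ (nonvacuous by hypothesis); compatibility then holds for every initial condition. For necessity I would argue contrapositively: if $C_h(t_1^-) \not\subseteq C_g(t_1)$, I pick $\mathbf{x}^\ast \in C_h(t_1^-) \setminus C_g(t_1)$, use forward invariance to produce an admissible trajectory whose terminal state at $t_1^-$ is $\mathbf{x}^\ast$, and observe that for that behavior the handoff fails, breaking compatibility for the corresponding initial condition.

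I expect the main obstacle to be the necessity direction of the second claim, which hinges on showing that every point of $C_h(t_1^-)$ is actually \emph{attainable} as a terminal state $\mathbf{x}(t_1^-)$ of some behavior consistent with $G_h$ from some $\mathbf{x}(t_0) \in C_h(t_0)$. I would address this through the forward-invariance and reachability guaranteed by the TV-CBF construction of Theorem~\ref{thm:TVCBF-timebound}, taking care with the left-limit definition of $C_h(t_1^-)$ and the continuity of the state at the switching instant, so that the reachable terminal set coincides with $C_h(t_1^-)$ and the universal quantifier over initial conditions indeed forces the full set inclusion.
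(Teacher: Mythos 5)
Your skeleton (handoff at $t_1$, continuity of the state, separate necessity/sufficiency arguments for each claim) matches the reasoning the paper sketches; note the paper gives no formal proof of this theorem, only the intuitive paragraph following it, and your reading of compatibility---\emph{at least one} behavior satisfying $G_h$ also satisfies $A_g$, respectively \emph{every} such behavior---is exactly the paper's. The genuine gap is the one you flag yourself at the end, and it cannot be repaired the way you propose. In the sufficiency direction of the first claim and the necessity direction of the second, you must exhibit a behavior that satisfies $G_h$ on $\Gamma_1$ and whose state at $t_1$ equals a \emph{prescribed} point $\mathbf{x}^\ast$ (in $C_h(t_1^-)\cap C_g(t_1)$, resp.\ in $C_h(t_1^-)\setminus C_g(t_1)$). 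You appeal to Theorem~\ref{thm:TVCBF-timebound} for this, but that theorem gives only forward invariance and finite-time convergence, never reachability: nothing in the TV-CBF definition prevents $C_h(t)$ from growing far faster than the dynamics can track, in which case most of $C_h(t_1^-)$ is unattainable. Concretely, take $\dot{x}=u$, $U=[-1,1]$, and $h(t,x)=(1+10t)^2-x^2$, a valid TV-CBF (a growing set never obstructs invariance) with $C_h(t)=[-(1+10t),\,1+10t]$; from $C_h(t_0)=[-1,1]$ the set of states reachable at $t_1=1$ by \emph{any} admissible control is only $[-2,2]$, while $C_h(t_1^-)=[-11,11]$. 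If $C_g(t_1)=[9,11]$ (e.g., $g(x)=1-(x-10)^2$), then $C_h(t_1^-)\cap C_g(t_1)\neq\emptyset$ yet no solution of the dynamics satisfying $A_h$ and $G_h$ ever satisfies $A_g$. So under your dynamics-aware reading of ``behavior'' the claimed equivalence is not just unproven---it is false, and no strengthening of Theorem~\ref{thm:TVCBF-timebound} will rescue it.

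The way out, and what the paper's framework actually intends, is that compatibility is a statement about behaviors over the contract variables $V=\{t,\mathbf{x}\}$, i.e., signals, \emph{not} closed-loop solutions of~(\ref{eq:DynamicalSystem}). In the A/G setting the paper adopts, $A$ and $G$ are sets of signals, and compatibility of the composition asks only whether some (continuous) signal satisfies $A_h$, $G_h$, and $A_g$ simultaneously. Under that reading no reachability is needed: for sufficiency of the first claim, take any selection $\mathbf{x}(t)\in C_h(t)$ on $[t_0,t_1)$ (possible by non-vacuity of the contracts) terminating at $\mathbf{x}(t_1)=\mathbf{x}^\ast$; for necessity of the second claim, the analogous selection ending at $\mathbf{x}^\ast\in C_h(t_1^-)\setminus C_g(t_1)$ satisfies $A_h$ and $G_h$ but violates $A_g$, destroying universal compatibility. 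Your necessity argument for the first claim and sufficiency argument for the second are fine and carry over verbatim. The lesson is that your invocation of Theorem~\ref{thm:TVCBF-timebound} should be deleted rather than strengthened: reconciling the contract-level handoff condition with what the dynamics can actually achieve is the job of Theorem~\ref{thm:TV-FCBF-Compatibility} and Theorem~\ref{them: piecewise TV-FCBF-1} (via the TV-FCBF convergence mechanism and the explicit safe input sets), not of this compatibility result.
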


Intuitively, contracts $\mathscr{C}_h$ and $\mathscr{C}_g$ imposed on adjacent intervals can be composed, that is, \emph{at least one} trajectory $\mathbf{x}(t)$ satisfying the guarantees of $\mathscr{C}_h$ will also satisfy the assumptions of $\mathscr{C}_g$,  
if and only if their safe sets overlap at the switching point of their intervals. 
On the other hand, for \emph{every} trajectory $\mathbf{x}(t)$ 
to remain safe throughout $\Gamma_1 \cup \Gamma_2$ the current safe set must be a subset of the upcoming safe set at the time of switching. This latter, stronger condition is equivalent to the condition for sequential composition of funnels~\cite{majumdar2017funnel} in Figure~\ref{fig:funnel_analogy}(a). We now present the corresponding result for FCBF contracts. 
 
\begin{theorem}\label{thm:TV-FCBF-Compatibility}
The composition of TV-CBF contract $\mathscr{C}_h$ imposed 
on $\Gamma_1 = [t_0, t_1)$ and TV-FCBF contract $\mathscr{C}_g$ imposed on $\Gamma_2 = [\tau_g, t_2)$ is compatible if and only if (1) $ C_h(t_1^-) \cap C_g(t_1) \neq \emptyset$ and (2) $ \tau_g + t_{conv, g} < t_1$ hold.
\end{theorem}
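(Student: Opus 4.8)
The plan is to mirror the structure of the proof of Theorem~\ref{thm:CompositionofCBFs}, reading compatibility of $\mathscr{C}_h \otimes \mathscr{C}_g$ as the existence of at least one trajectory $\mathbf{x}(t)$ that simultaneously satisfies the guarantee $G_h$ of $\mathscr{C}_h$ and the assumption $A_g$ of the TV-FCBF contract $\mathscr{C}_g$, while realizing the intended composite behavior, namely $\mathbf{x}(t) \in C_h(t)$ on $[t_0, t_1)$ and $\mathbf{x}(t) \in C_g(t)$ on $[t_1, t_2)$. The essential difference from Theorem~\ref{thm:CompositionofCBFs} is that $\Gamma_2 = [\tau_g, t_2)$ now \emph{overlaps} $\Gamma_1 = [t_0, t_1)$, since condition (2) forces $\tau_g < t_1$, so the convergence phase of $\mathscr{C}_g$ runs concurrently with the invariance phase of $\mathscr{C}_h$.

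First I would establish necessity of (2). Invoking the finite-time convergence bound of Theorem~\ref{thm:TVCBF-timebound}, a trajectory realizing $A_g$ stays outside $C_g(t)$ on $[\tau_g, \tau_g + t_{conv,g})$ and enters $C_g$ only at $\tau_g + t_{conv,g}$. Since the composite behavior demands $\mathbf{x}(t) \in C_g(t)$ for all $t \in [t_1, t_2)$, the convergence must complete strictly before $t_1$; otherwise $\mathbf{x}(t_1)$ would still lie outside $C_g(t_1)$, a contradiction. This yields $\tau_g + t_{conv,g} < t_1$. Next I would prove necessity of (1): by condition (2) the guarantee $G_g$ is already in force at $t_1$, so $\mathbf{x}(t_1) \in C_g(t_1)$, while $G_h$ forces $\mathbf{x}(t_1^-) \in C_h(t_1^-)$; a single trajectory passing through both constraints at the switching instant requires $C_h(t_1^-) \cap C_g(t_1) \neq \emptyset$.

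For sufficiency I would construct an explicit witness trajectory. Choosing a terminal switching state in the nonempty set $C_h(t_1^-) \cap C_g(t_1)$ guaranteed by (1), I would apply a controller $u \in \mathcal{U}_{safe}(t)$ as in Theorem~\ref{thm:TVCBF-timebound} to drive $\mathbf{x}$ into $C_g$ by time $\tau_g + t_{conv,g}$, which by (2) precedes $t_1$, while maintaining $\mathbf{x}(t) \in C_h(t)$ throughout $[t_0, t_1)$ via the forward-invariance controller of $\mathscr{C}_h$. Because convergence completes before the switching instant, on $[\tau_g + t_{conv,g}, t_1)$ the trajectory lies in $C_h(t) \cap C_g(t)$, so that the hand-off at $t_1$ is consistent and both $G_h$ and $A_g$ are met.

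I expect the main obstacle to be the \emph{concurrent} activity of the two contracts on the overlap window $[\tau_g, t_1)$: during convergence the witness trajectory must remain in $C_h(t)$ as required by $G_h$, yet lie \emph{outside} $C_g(t)$ as required by $A_g$, i.e., stay in $C_h(t) \setminus C_g(t)$, before being steered into the intersection by $t_1$. The delicate step is to verify that the safe-input set $\mathcal{U}_{safe}(t)$ of the FCBF is jointly nonempty with the invariance constraint of $\mathscr{C}_h$ along this window, so that a single admissible control realizes both the finite-time convergence to $C_g$ and the forward-invariance of $C_h$; the timing conditions (1) and (2) are precisely what render this reconciliation feasible.
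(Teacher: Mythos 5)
The paper states this theorem \emph{without} proof---all it offers is the informal remark following it---so there is no paper proof to compare against line by line; judged against the paper's definitions and that remark, your overall reading of compatibility (existence of at least one trajectory satisfying $G_h$, the assumption $A_g$, and the composite safety requirement) is the intended one, and your necessity argument for condition (1) via continuity of the trajectory is sound. There is, however, a boundary slip in your necessity argument for condition (2): it only rules out $\tau_g + t_{conv,g} > t_1$. In the case $\tau_g + t_{conv,g} = t_1$, the assumption $A_g$ itself places $\mathbf{x}(t_1) = \mathbf{x}(\tau_g + t_{conv,g}) \in C_g(t_1)$, so the contradiction you invoke (``$\mathbf{x}(t_1)$ would still lie outside $C_g(t_1)$'') does not arise; your argument therefore establishes only $\tau_g + t_{conv,g} \leq t_1$, not the strict inequality in the statement.

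The genuine gap is in sufficiency. You correctly isolate the crux---during $[\tau_g, \tau_g + t_{conv,g})$ the witness must lie in $C_h(t) \setminus C_g(t)$ (inside $C_h$ for $G_h$, outside $C_g$ for $A_g$), and on $[\tau_g + t_{conv,g}, t_1)$ it must lie in $C_h(t) \cap C_g(t)$---but you then assert, without argument, that conditions (1) and (2) ``are precisely what render this reconciliation feasible.'' They are not: both conditions constrain only the single instant $t_1$ and the convergence deadline, and say nothing about the relative geometry of $C_h(t)$ and $C_g(t)$ at intermediate times. For instance, if $C_h(t) \subseteq C_g(t)$ throughout the overlap window (with all sets nonempty, so (1) holds, and with $\mathscr{C}_g$ non-vacuous, so $t_{conv,g}$ is defined and (2) can hold), then no trajectory satisfying $G_h$ can ever satisfy the ``outside $C_g$'' clause of $A_g$, and your witness cannot exist; similarly, nothing in (1)--(2) forces $C_h(t) \cap C_g(t) \neq \emptyset$ for $\tau_g + t_{conv,g} \leq t < t_1$, which your hand-off construction needs. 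So the sufficiency direction as sketched can fail unless one either adds standing assumptions on the safe sets over the whole window or reads $A_g$ more weakly than its literal statement. Relatedly, building the witness by intersecting the invariance controller of $\mathscr{C}_h$ with the finite-time-convergence input set of Theorem~\ref{thm:TVCBF-timebound} imports exactly the joint input-feasibility question that the paper defers to Theorem~\ref{them: piecewise TV-FCBF-1}; compatibility is a claim about behaviors (existence of a valid environment), so the witness should be constructed at the trajectory level, where the set-theoretic obstruction above is the real issue.
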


In other words, we can guarantee that every trajectory $\mathbf{x}(t)$ 
remains in the safe sets of $\mathscr{C}_h$ and $\mathscr{C}_g$ throughout $\Gamma_1 \cup \Gamma_2$ by simply requiring that $C_h(t)$ overlaps with $C_g(t)$ at the time of switching, provided that $\mathbf{x}(t)$ converges to $C_g(t)$ before $\Gamma_1$ has finished. Because $t_{conv, g} \le T_g$ by Theorem~\ref{thm:TVCBF-timebound}, if $\tau_g + T_g \le t_1$, then  condition (2) in Theorem~\ref{thm:TV-FCBF-Compatibility} will also hold. 
We leverage this observation   
combined with Definition~\ref{def: TV-FCBF-1} and the results above to compute the set of safe inputs for a composition of contracts imposed on neighboring intervals.

\begin{theorem}\label{them: piecewise TV-FCBF-1}
Let $ \mathscr{C}_{{h}} = \bigotimes_{i=1}^{\mathcal{N}}\mathscr{C}_{h, i}$ be
imposed on non-overlapping intervals $\{\Gamma_1, \Gamma_2, \dots, \Gamma_{\mathcal{N}}\}$ such that $\forall i:  1\leq i \leq \mathcal{N}$, $\Gamma_i = [t_{i-1},t_i)$ and $t_{i-1}<t_i$. Let $\mathscr{C}_{h,i}$ be either a TV-CBF contract imposed over $\Gamma_i$ or $\mathscr{C}_{h,i}=\mathscr{C}_{h,i1}\otimes \mathscr{C}_{h,i2}$ where $\mathscr{C}_{h,i1}$ is a TV-CBF contract imposed over $\Gamma_i$ and $\mathscr{C}_{h,i2}=\mathscr{C}_{h,i+1}$ is a TV-FCBF contract imposed over $[\tau_i, t_i)$ for $t_{i-1} < \tau_i < t_i $. If
$\mathbf{x}(t_0)\in C_{h_1}(t_0)$, then any controller $u \in \mathcal{U}_{safe, h}(t)$ will make the closed-loop system 
satisfy $\mathscr{C}_{h}$, where

\begin{align*}
\begin{split}
\mathcal{U}_{safe, h}(t) =& \begin{cases} \mathcal{U}_{safe, i}(t) & (t_{i-1}<t \leq \tau_i) \lor  \\
&C_{h_i}(t_i^-) \subseteq C_{h_{i+1}}(t_i)\\ 
\\
\mathcal{U}_{safe, i}(t)\: \cap & (\tau_i< t<t_{i}) \land\\\mathcal{U}_{safe, {i+1}}(t) & C_{h_i}(t_i^-) \nsubseteq C_{h_{i+1}}(t_i) \land\\
& C_{h_i}(t_i^-) \cap C_{h_{i+1}}(t_i) \neq \emptyset
\end{cases} 
\end{split}
\end{align*}
\begin{align*}
\begin{split}
\mathcal{U}_{safe, i}(t)=& \{u\in U: \frac{\partial}{\partial t}h_{i}(t,\mathbf{x}) + L_{f}h_{i}(t,\mathbf{x})+\\ & + L_{g}h_{i}(t,\mathbf{x})u + \alpha (h_{i}(t,\mathbf{x})) \geq 0\}
\end{split}
\end{align*}
\begin{align*}
\mathcal{U}_{safe, {i+1}}(t)=&\{u\in U: \frac{\partial}{\partial t}h_{i+1}(t,\mathbf{x}) + L_{f}h_{i+1}(t,\mathbf{x})+\\ & + L_{g}h_{i+1}(t,\mathbf{x})u+\gamma_{i+1} \mathrm{sign} (h_{i+1}(t,\mathbf{x}))\cdot \\ & \cdot\vert h_{i+1}(t,\mathbf{x})\vert ^{\rho_{i+1}}\geq 0\}, 
\end{align*}provided that
\begin{equation*}
\label{eq: T for piecewise TV-FCBF-1}
T_{{i+1}}=\frac{\lvert h_{i+1}(\tau_i,\mathbf{x}(\tau_i)) \rvert ^{1-\rho_{i+1}}}{\gamma_{i+1} \cdot (1-\rho_{i+1})}  < t_{i}-\tau_i \leq t_{i} -t_{i-1}.
\end{equation*}
\end{theorem}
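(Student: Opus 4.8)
The plan is to prove the statement by induction on the interval index $i$, stitching together Theorems~\ref{thm:TVCBF-timebound}, \ref{thm:CompositionofCBFs}, and \ref{thm:TV-FCBF-Compatibility} one switching instant at a time. The inductive invariant I would carry is the following: under any $u \in \mathcal{U}_{safe, h}(t)$ the closed-loop trajectory satisfies $\mathbf{x}(t) \in C_{h_j}(t)$ for every $t \in \Gamma_j$ with $j < i$, and in addition $\mathbf{x}(t_{i-1}) \in C_{h_i}(t_{i-1})$, so that the assumption $A_{h_i}$ of the $i$-th contract is met. The base case $i=1$ is immediate, since $\mathbf{x}(t_0) \in C_{h_1}(t_0)$ is exactly $A_{h_1}$.

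For the inductive step I would first treat the interior of $\Gamma_i$ and then the switch at $t_i$. On $(t_{i-1}, \tau_i]$ (and, when $C_{h_i}(t_i^-) \subseteq C_{h_{i+1}}(t_i)$, on all of $\Gamma_i$) the prescribed set is $\mathcal{U}_{safe, i}(t)$, the TV-CBF safe-input set for $h_i$; the forward-invariance property underlying the TV-CBF contract (Definition~\ref{def:CBFContract}, the time-varying form of the invariance result recalled after Definition~\ref{def:ControlBarrierFunctions}) then keeps $\mathbf{x}(t) \in C_{h_i}(t)$ on that stretch. The substance lies at the switch, where I must show $\mathbf{x}(t_i) \in C_{h_{i+1}}(t_i)$, which splits along the two branches defining $\mathcal{U}_{safe, h}$. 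In the branch $C_{h_i}(t_i^-) \subseteq C_{h_{i+1}}(t_i)$, Theorem~\ref{thm:CompositionofCBFs} (the ``for all trajectories'' direction) applies: continuity of $\mathbf{x}(\cdot)$ gives $\mathbf{x}(t_i) \in C_{h_i}(t_i^-) \subseteq C_{h_{i+1}}(t_i)$. In the complementary branch, where the safe sets merely overlap, I would engage the TV-FCBF contract $\mathscr{C}_{h,i+1}$ on $[\tau_i, t_i)$ and prescribe the intersection $\mathcal{U}_{safe,i}(t) \cap \mathcal{U}_{safe,i+1}(t)$ on $(\tau_i, t_i)$. Any $u$ in this intersection satisfies both constraints separately, so the $\mathcal{U}_{safe,i}$ factor preserves $\mathbf{x}(t) \in C_{h_i}(t)$ as before, while the $\mathcal{U}_{safe,i+1}$ factor is precisely the FCBF inequality, so the finite-time half of Theorem~\ref{thm:TVCBF-timebound} drives $\mathbf{x}$ into $C_{h_{i+1}}$ within time $T_{i+1}$ (and keeps it there by the invariance half). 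The proviso $T_{i+1} < t_i - \tau_i$ forces convergence strictly before $t_i$, hence $\mathbf{x}(t_i) \in C_{h_{i+1}}(t_i)$, re-establishing the invariant.

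The step I expect to be the main obstacle is not the chaining but certifying that the prescribed input set is non-vacuous on the overlap, i.e.\ that $\mathcal{U}_{safe,i}(t) \cap \mathcal{U}_{safe,i+1}(t) \neq \emptyset$ there, for otherwise the clause ``any controller $u \in \mathcal{U}_{safe,h}(t)$'' would be empty and no feedback law would exist. Enforcing the CBF invariance of $h_i$ while simultaneously enforcing the FCBF descent of $h_{i+1}$ demands a single input meeting two constraints that are affine in $u$ but can a priori conflict. Here I would invoke Theorem~\ref{thm:TV-FCBF-Compatibility}, whose hypotheses are exactly (1) $C_{h_i}(t_i^-) \cap C_{h_{i+1}}(t_i) \neq \emptyset$ (the branch condition) and (2) $\tau_i + t_{conv,i+1} < t_i$; the latter follows from $T_{i+1} < t_i - \tau_i$ together with $t_{conv,i+1} \le T_{i+1}$, as noted in the remark after that theorem. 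Compatibility of $\mathscr{C}_{h,i1} \otimes \mathscr{C}_{h,i+1}$ then certifies the existence of a valid joint implementation, i.e.\ a nonempty intersection of safe-input sets along the overlap, closing the inductive step. Finally, concatenating the intervals and observing that each $G_{h_i}$ holds on its $\Gamma_i$ yields that the closed loop satisfies the composed contract $\mathscr{C}_h$, as claimed.
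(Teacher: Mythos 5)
Your induction is sound and is essentially the argument the paper intends: the paper states this theorem without proof, positioning it as a direct consequence of Definition~\ref{def: TV-FCBF-1} and Theorems~\ref{thm:TVCBF-timebound}--\ref{thm:TV-FCBF-Compatibility}, and your chaining --- TV-CBF forward invariance inside each $\Gamma_i$, continuity of $h_i$ and of $\mathbf{x}(\cdot)$ at the switch in the $C_{h_i}(t_i^-)\subseteq C_{h_{i+1}}(t_i)$ branch, and the finite-time half of Theorem~\ref{thm:TVCBF-timebound} together with the proviso $T_{i+1} < t_i - \tau_i$ in the overlap branch --- assembles those ingredients correctly. The inductive invariant (guarantees hold on all earlier intervals, and the assumption of the $i$-th contract is re-established at $t_{i-1}$) is exactly the right bookkeeping for satisfaction of the composed contract $\mathscr{C}_h$.

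The one step you should reconsider is your final paragraph. In this paper, compatibility of a contract means its assumptions are nonempty, i.e., a valid \emph{environment} (an admissible initial condition) exists; it is not a statement about implementations, so Theorem~\ref{thm:TV-FCBF-Compatibility} does not certify that $\mathcal{U}_{safe,i}(t)\cap\mathcal{U}_{safe,i+1}(t)\neq\emptyset$ pointwise on $(\tau_i,t_i)$. That feasibility question is genuinely separate: each constraint is affine in $u$ and individually nonempty (by the sup conditions in Definitions~\ref{def:ControlBarrierFunctions} and~\ref{def: TV-FCBF-1}), but their intersection can still be empty, e.g., when $U$ is bounded. Fortunately, your proof does not actually need what you tried to extract from Theorem~\ref{thm:TV-FCBF-Compatibility}: the theorem quantifies over $u \in \mathcal{U}_{safe,h}(t)$, so where the set is empty the claim is vacuous, and wherever a controller in the set exists your inductive argument goes through unchanged. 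The clean fix is to either drop the appeal to Theorem~\ref{thm:TV-FCBF-Compatibility} for this purpose, or state pointwise nonemptiness as an explicit standing assumption --- which is in fact how Algorithm~\ref{algo:ControlDesignAlgo} handles it, returning \texttt{failure} when $\mathcal{U}_{safe,f}(t)=\emptyset$ --- rather than presenting it as a consequence of contract compatibility.
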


For a given set of STL tasks as in~\eqref{eq:STLFragment}, we propose a control synthesis algorithm by mapping the tasks to contracts imposed on neighboring intervals. 

\section{Synthesis Algorithm}

We propose a two-step control synthesis algorithm. In the pre-processing step, the high-level planner provides a set of STL tasks $\{\psi_1 =\mathcal{T}_{\Gamma_1}(h_1(\mathbf{x}) \ge 0), \ldots, \psi_\mathcal{N} = \mathcal{T}_{\Gamma_\mathcal{N}}(h_\mathcal{N}(\mathbf{x}) \ge 0)\}$, with $\mathcal{T} \in \{\mathcal{G},\mathcal{F}\}$, defined over the time intervals $\{\Gamma_1, \ldots, \Gamma_\mathcal{N}\}$ such that, $\forall i: 1 \leq i \leq \mathcal{N}$, $t_{i-1} < t_i$, and $\Gamma_i = [t_{i-1},t_i)$. %
Predicates of the form $\mathcal{F}_{\Gamma_i} (h_i(\mathbf{x}) \ge 0)$ are converted to $\mathcal{G}_{[t_s, t_s+\epsilon)} h_i(\mathbf{x})\ge 0$, %
where $[t_s, t_s+\epsilon) \subset \Gamma_i$ is a user-specified time of satisfaction and  $\epsilon>0$. %
We then divide the set of STL tasks into the minimum number of STL formulas $\mathtt{G}_1, \ldots, \mathtt{G}_k$ such that each $\mathtt{G}_j$ is a conjunction of predicates 
$\mathcal{G}_{\Gamma_i}\phi_{h_i}$ defined over non-overlapping intervals, as in  Theorem~\ref{them: piecewise TV-FCBF-1}. For example, given the specification $\Psi = \mathcal{G}_{[0, T_{max})}(h(\mathbf{x}) \geq 0) \wedge \bigwedge_{i=1}^{N} \mathcal{G}_{[t_{i-1},t_i)} (h_{i}(\mathbf{x}) \geq 0$), we can separate its predicates into $\mathtt{G}_1 = \mathcal{G}_{[0, T_{max})}(h(\mathbf{x}) \geq 0)$ and  $\mathtt{G}_2 = \bigwedge_{i = 1}^{N}\mathcal{G}_{[t_{i-1},t_i)} (h_{i}(\mathbf{x}) \geq 0)$ such that $\Psi = \mathtt{G}_1 \wedge \mathtt{G_2}$.

Next, we proceed to the synthesis step. Each $\mathcal{G}_{\Gamma_i}\phi_{h_i}$ in  $\mathtt{G}_j$ corresponds to the predicate function $h_i(\mathbf{x})$. Because 
any $\mathbf{x}(t)$ in the safe set $C_{h_i}(t)$
satisfies $\mathcal{G}_{\Gamma_i}\phi_{h_i}$,  
control synthesis for $\mathtt{G}_j$ reduces to finding a controller for a contract of the form $\mathscr{C}_j = \bigotimes_{i = 1}^{|\mathtt{G}_j|} \mathscr{C}_{h, i}$. Consequently, we use Theorem~\ref{them: piecewise TV-FCBF-1} to construct the safe input set $\mathcal{U}_{safe, j}(t)$ by conditioning over two possibilities: (1) $C_{h_i}(t_i^-) \subseteq C_{h_{i+1}}(t_i)$ or (2) $C_{h_i}(t_i^-) \cap C_{h_{i+1}}(t_i) \neq \emptyset$ and $T_{{i+1}} \le t_i - t_{i-1}$. 
 We can synthesize a controller to simultaneously satisfy $\mathscr{C}_{ 1}, \ldots, \mathscr{C}_{k}$, corresponding to $\mathtt{G}_1, \ldots, \mathtt{G}_k$, by resorting to contract conjunction (see Section~\ref{subsec:ContractBasedDesign}). 
 Satisfying $\bigwedge_{j = 1}^k \mathscr{C}_{j}$ corresponds to finding a controller in the safe set $\mathcal{U}_{safe, f}(t) = \bigcap_{j = 1}^k \mathcal{U}_{safe, j}(t)$.
 Finally, the safe control actions at each time step can be obtained by solving a quadratic program of the form $u_{safe}= \underset{u \in \mathcal{U}_{safe}(t)}{\mbox{argmin}}\; ||u-u_{nom}||_2^2$ , where $u_{nom}$ is a nominal PID or any other feedback controller. The procedure is summarized in Algorithm~\ref{algo:ControlDesignAlgo}.

\begin{algorithm}[t]
\small
\KwData{ $\Psi = \bigwedge_{i=1}^I \mathcal{G}_{\Gamma_i} (h_i(\mathbf{x}) \ge 0) \wedge \bigwedge_{j=1}^J \mathcal{F}_{\Gamma_j} (h_j(\mathbf{x}) \ge 0) $, times of satisfaction $\mathbf{t} = {t_{s, 1}}, \ldots, t_{s, J}$}
 \KwResult{$u_{safe}$ or \texttt{failure}}
 $\Psi' = \bigwedge_{i=1}^{I} \mathcal{G}_{\Gamma_i} (h_{i}(\mathbf{x}) \ge 0) \wedge \bigwedge_{j=1}^{J} \mathcal{G}_{[t_{s, j}, t_{s, j} +\epsilon )} (h_{j}(\mathbf{x}) \ge 0)$;\\
 
 /*Divide STL into formulas $\mathtt{G}_j$ whose predicates are defined on non-overlapping intervals $\Gamma_{1},\dots, \Gamma_{\mathcal{N}_j}$, $\Gamma_i=[t_{i-1},t_i)$*/  \\
 $\mathtt{G}_1,\mathtt{G_2},\dots,\mathtt{G}_k$ $=$ $\mathtt{group}(\Psi')$;\\

  \For{$\mathtt{G}_j \in \{\mathtt{G}_1,\mathtt{G}_2,\dots,\mathtt{G}_k\}$}{
  \For{$h_i \in \mathtt{G_j}$}{
  $comp = \mathtt{chk-compatibility}$($\mathscr{C}_{h,i} \otimes \mathscr{C}_{h,i+1}$)\\
  /* Check (1) $C_{h_i}(t_{i}^-) \subseteq C_{h_{i+1}}(t_{i})$ or (2) $C_{h_i}(t_{i}^-) \cap C_{h_{i+1}}(t_{i}) \neq \emptyset$ and $T_{{i+1}} < t_{i}-t_{i-1}$.*/\\
  \If{$comp = \bot$}{
  \KwRet{\texttt{failure}}\\}
  }
   
 }
\For {$t \leq T_{max}$}{
\begin{align}
\mathcal{U}_{safe, f}(t) = \bigcap_{j=1}^{k} \mathcal{U}_{safe, j}(t)
\end{align}
/* $\mathcal{U}_{safe, j}(t)$ is calculated as in Theorem~\ref{them: piecewise TV-FCBF-1}*/ 

\If {$\mathcal{U}_{safe, f}(t) \neq \emptyset$}{
\begin{align}
\label{eq:u_safe_using_QP}
\begin{split}
u_{safe} &= \underset{u \in \mathcal{U}_{safe, f}(t)}{\mbox{argmin}}\; ||u-u_{nom}||_2^2\\
\end{split}
\end{align}
} 
\Else{
\KwRet{\texttt{failure}}
}
}
\caption{Control Synthesis Algorithm} \label{algo:ControlDesignAlgo}
\end{algorithm}

\section{Case Study}\label{sec:casestudy}

We illustrate our algorithm on a vehicular planning problem under safety and regulatory constraints. 
We use a non-linear model for the ego vehicle. Let $X_f$, $V_f$, $X_l$, and $V_l$ be the longitudinal position and velocity of the ego vehicle and the lead vehicle, respectively. Let $X_l>X_f$ and let $X_r=X_l-X_f>0$ be the relative distance and $V_r=V_l-V_f$ the relative velocity between the ego and the lead vehicle. Let $F_r=c_0+c_1 V_f + c_2 V_f^2$ be the sum of all frictional and aerodynamic forces on the ego vehicle, where $c_0$, $c_1$, and $c_2$ are parameters.  Let $u \in U $ be the input wheel force, $\mathbf{x}=\begin{pmatrix} X_f & V_f & X_l \end{pmatrix}^T$, $f(\mathbf{x}) = \begin{pmatrix} V_f & -\frac{1}{m}F_r & V_l \end{pmatrix}^T$, and $g(\mathbf{x}) = \begin{pmatrix} 0 & \frac{1}{m}& 0 \end{pmatrix}^T$. The longitudinal dynamics of the system are given by $\dot{\mathbf{x}} = f(\mathbf{x}) + g(\mathbf{x})u$ while $t_{\mathcal{N}_v}$ is the time horizon of the mission.

\subsection{STL Specifications.}  
The mission includes safety and regulatory requirements as follows:

\begin{enumerate}

\item The ego vehicle must maintain a safe distance from the lead vehicle. Let $h_1(\mathbf{x})$ 
be the spacing error between the ego vehicle and the lead vehicle, defined by $h_1 := X_r-hV_f-S_0 -\frac{V_f^2-V_l^2}{2a_{max}}$, 
where $h$ is a constant time-headway (time required by the ego vehicle to cover the distance $X_r$), $S_0>0$ is the relative distance between ego and lead vehicle when they are in the rest position, and $a_{max}>0$ is the absolute value of the maximum braking capability. 
We then require $\mathcal{G}_{[0,t_{\mathcal{N}_v})}\phi_{h_1} 
 = \mathcal{G}_{[0,t_{\mathcal{N}_v})} (h_1(\mathbf{x}) \ge 0)$.

\item The ego vehicle must follow variable speed limits. Let the mission interval $\Gamma_v=[t_0, t_{\mathcal{N}_v})$ be divided into $\mathcal{N}_v$ consecutive intervals such that $\Gamma_v = \bigcup\limits_{1 \le i \leq \mathcal{N}_v}[t_{i-1}, t_i)$ and $t_{i-1}<t_i$ for all $i$. Let the speed limit $V_{max}(t)$ be a discrete-valued function of time, described in a piecewise manner as $V_{max}(t) = V_{max,i}$ $\forall t \in \Gamma_{vi}=[t_{i-1}, t_i)$, $1\leq i \leq \mathcal{N}_v$. We define $h_{v,i}(\mathbf{x}) := V_{max,i}(t)-V_f$ and $\phi_{h_{v,i}}:= h_{v,i}(\mathbf{x}) \geq 0$. We then require 
$\mathcal{G}_{[0,t_{\mathcal{N}_v})}\phi_v = \bigwedge\limits_{1\le i \le  \mathcal{N}_v} \mathcal{G}_{\Gamma_{vi}} \phi_{v,i}$. Let $T_{conv,v}$ be the convergence time for  $\phi_v$, with $T_{conv,v} < t_i-t_{i-1}$, for all $i$ such that $1 \leq i \leq \mathcal{N}_v$.  

\item The ego vehicle must follow the traffic signals. Let $N$ be the number of signals, and $s_i \in \{\mathtt{Red, Yellow, Green}\}$ be the state of the $i^{th}$ traffic signal. Let $g_{ij}$, $y_{ij}$, and $r_{ij}$ be the instants at which the $i^{th}$ signal turns green, yellow, and red for the $j^{th}$ time, respectively, with $g_{ij}<y_{ij}<r_{ij}<g_{i,j+1}$. We obtain that $s_i=\mathtt{Green}$ $\forall t \in \Gamma_{ij,g}=[g_{ij},y_{ij})$, $s_i=\mathtt{Yellow}$ $\forall t \in \Gamma_{ij,y}=[y_{ij},r_{ij})$, and $s_{ij}=\mathtt{Red}$ $\forall t \in \Gamma_{ij,r}=[r_{ij},g_{i,j+1})$. Let $s_i \neq \mathtt{Red}$ $\forall t\in \Gamma_{ij,\bar{r}}=[g_{ij},r_{ij})$. 
Let $h_{r,i}(\mathbf{x})= P_{i} - X_f - \beta V_f  - S_0$ and  $h_{\bar{r},i}(\mathbf{x})= P_{i+1} - X_f - \beta V_f  - S_0$, where $P_i$ is the position of the $i^{th}$ traffic signal,   $\beta>0$, and $P_{i-1}<X_f \leq P_{i} $. If $s_{i}=\mathtt{Red}$, then $\phi_{h_{r,i}} := (h_{r,i}(\mathbf{x})\geq 0)$ must hold; if $s_{i}\neq \mathtt{Red}$, then $\phi_{h_{\bar{r},i}} := (h_{\bar{r},i}(\mathbf{x}) \geq 0)$ must hold. Let $\mathbf{1}_{B}(\mathbf{x}):\mathbb{R}^n\rightarrow \{\top, \bot\}$ be an indicator function such that $\mathbf{1}_{B}(\mathbf{x})=\top$ if and only if $\mathbf{x}\in B \subseteq \mathbb{R}^n$. 
Let $\mathtt{T}_i = \{\mathbf{x}:P_{i-1}<X_f\leq P_i\}$, for $1\leq i \leq N$. We can encode the traffic rules as $\bigvee_{i=1}^{N}\bigl(\bigwedge_{j\in \mathbb{N}} \bigl[ \mathcal{G}_{\Gamma_{ij,r}}\phi_{h_{r,i}}\wedge \mathbf{1}_{\mathtt{T}_i}(\mathbf{x}) \bigr] \wedge \bigl[ \mathcal{G}_{\Gamma_{ij,\bar{r}}}\phi_{h_{\bar{r},i}}\wedge \mathbf{1}_{\mathtt{T}_i}(\mathbf{x}) \bigr] \bigr)$. Therefore,  
$\exists k: 1\leq k\leq N$ such that $\mathbf{1}_{\mathtt{T}_k}(\mathbf{x})=\top$ and $\mathbf{1}_{\mathtt{T}_i}(\mathbf{x})=\bot$, $\forall i \neq k$. 
Let $h_{pos}(t,\mathbf{x})=\sum_{i=1}^{N} (h_{pos,i}(t,\mathbf{x}) \cdot \mathbf{1'}_{\mathtt{T}_i}(\mathbf{x}))$, where  $h_{pos,i}(t,\mathbf{x})=h_{r,i}(\mathbf{x})$, $\forall t\in \Gamma_{ij,r}$, $h_{pos,i}(t,\mathbf{x})=h_{\bar{r},i}(\mathbf{x})$, $\forall t\in \Gamma_{ij,\bar{r}}$, and $\mathbf{1'}_{B}:\mathbb{R}^n\to \{0,1\}$ is defined such that $\mathbf{1'}_{\mathtt{B}}(\mathbf{x})=1$ if and only if $\mathbf{x}\in \mathtt{B}$. Let the duration of the yellow signal of the $j^{th}$ cycle of the $i^{th}$ traffic signal be $Y_{ij}=r_{ij}-y_{ij}$, and let $Y_{ij}$ be the convergence time of $h_{r,i}(t,\mathbf{x})$ and $h_{\bar{r},i}(t,\mathbf{x})$. 
\item 
The lead vehicle communicates its velocity $V_l$ and acceleration $a_l$  to the ego vehicle. 
\end{enumerate}

\noindent We organize the specifications above into STL formulas satisfying the conditions in Algorithm~\ref{algo:ControlDesignAlgo} and map them to compositions of contracts.  
\subsection{Mapping STL Formulas to Contracts}
The overall STL specification before pre-processing is given by $\phi_m= \bigwedge_{i=1}^3\mathtt{G_i}$, where $\mathtt{G_1}=\mathcal{G}_{[0,t_{\mathcal{N}_v})}\phi_{h_1}$, $\mathtt{G_2} = \bigwedge_{k=1}^{\mathcal{N}_v} \mathcal{G}_{[t_{k-1},t_k)} \phi_{h_{v,i}}$, $\mathtt{G_3}=\bigvee_{i=1}^{N}\bigl(\bigwedge_{j\in \mathbb{N}} \bigl[ \mathcal{G}_{\Gamma_{ij,r}}\phi_{h_{r,i}}\wedge \mathbf{1}_{\mathtt{T}_i}(\mathbf{x}) \bigr] \wedge \bigl[ \mathcal{G}_{\Gamma_{ij,\bar{r}}}\phi_{h_{\bar{r},i}}\wedge \mathbf{1}_{\mathtt{T}_i}(\mathbf{x}) \bigr] \bigr)$.
The predicates in the specification are allocated to three  formulas such that 
the predicates in each formula are defined on non-overlapping intervals. 
%
We first consider $\mathtt{G}_1$. Since  $ h_1(\mathbf{x})$ is $\mathcal{C}^1$ and there is only one STL predicate over the whole horizon, we generate the CBF contract $\mathscr{C}_{{1}}$
with the corresponding safe set $C_1$.

$\mathtt{G}_2$ can be associated to a CBF contract $\mathscr{C}_{2}=\bigotimes_{i=1}^{\mathcal{N}_v}\mathscr{C}_{h_v, i}$, where $\mathscr{C}_{h_v, i}$ can either be a CBF contract over interval $\Gamma_i$ or $\mathscr{C}_{h_v, i} = \mathscr{C}_{h_v, i1}\otimes \mathscr{C}_{h_v, i2}$, where $\mathscr{C}_{h_{v}, i1}$ is a CBF contract for the safe set $C_{h_{v, i}}$ and $\mathscr{C}_{h, i2}$ is a FCBF contract for $C_{h_{v, i+1}}$, imposed over the intervals $\Gamma_i$ and $[t_i-T_{conv,v}, t_{i})$, respectively. 
Given 
$h_{v,i}(\mathbf{x})=V_{max,i}-V_f$, we set  
$h_v(t,\mathbf{x})=h_{v, i}(t,\mathbf{x})$ for $t\in \Gamma_i$ $\forall i \in \mathcal{N}_v$. 

Similarly to $\mathscr{C}_2$, we can form $\mathscr{C}_{3}$ for $\mathtt{G}_3$, which specifies the traffic signals' constraints for the system when the ego vehicle is approaching the $i^{th}$ traffic signal. 
When $P_{i-1} < X_f \leq P_i$, then we have $C_{pos,i}(t) =\left\{ \mathbf{x} \in \mathbb{R}^n: h_{pos,i}(t,\mathbf{x}) \geq 0 \right\}$.

At $t=r_{ij}$, we only require that $C_{pos,i}(t=r_{ij}^-) \cap C_{pos,i}(t=r_{ij}) \neq \emptyset$ 
rather than $C_{pos,i}(t=r_{ij}^-) \subseteq C_{pos,i}(t=r_{ij})$. 

\begin{figure*}
\centering
\setkeys{Gin}{width=0.22\textwidth}
\subfloat[
\small
$\mathcal{G}_{[0,t_{\mathcal{N}_v})}h_1(x)\geq 0$. The ego vehicle always maintains a safe distance from the lead vehicle.
          \label{fig:h1_vs_Time}]{\includegraphics{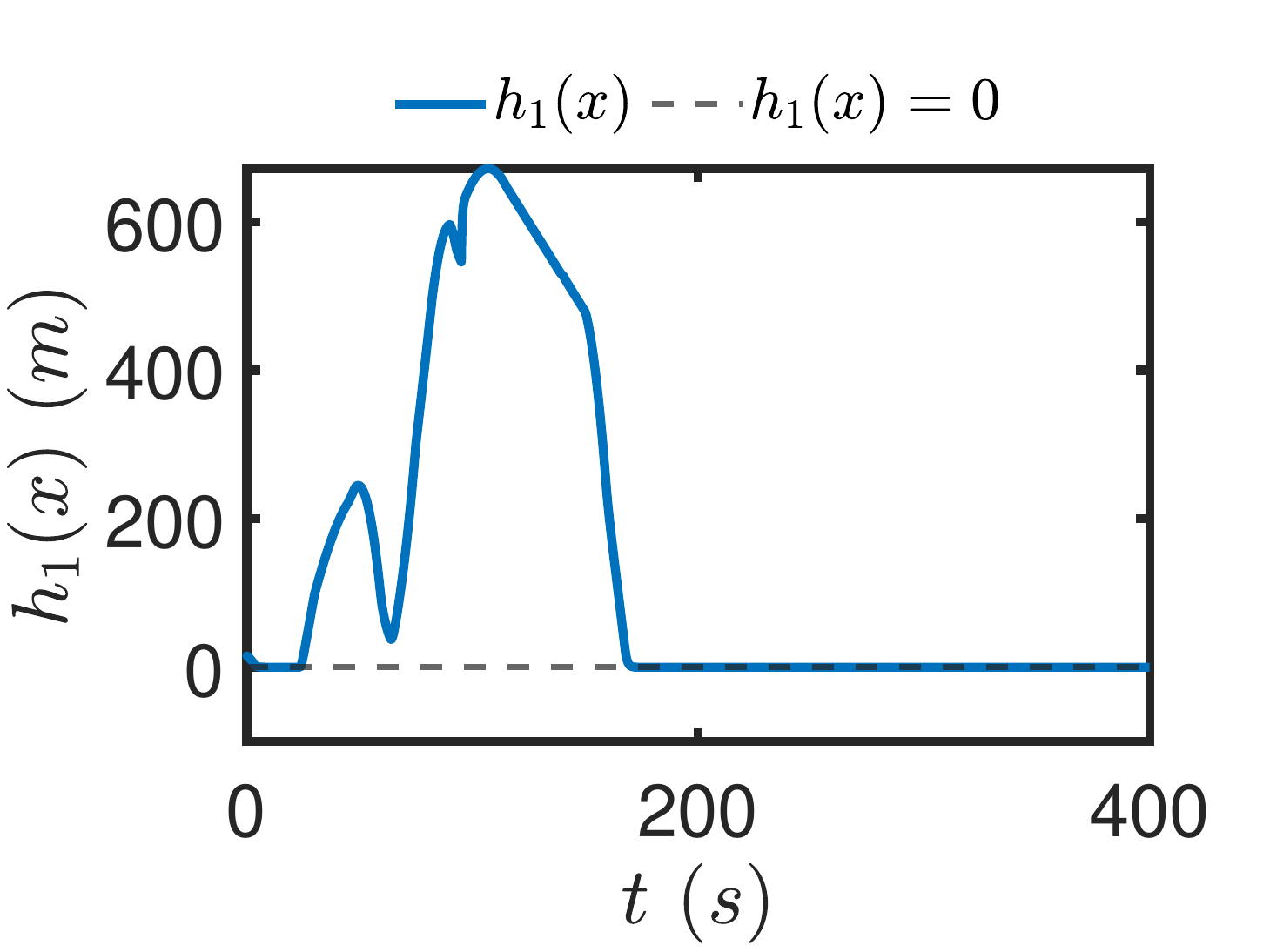}}
    \hfill
\subfloat[
\small
$
\mathcal{G}_{[0,\mathcal{N}_v)}h_{v}(t,\mathbf{x}) \geq 0$. Speed profile of the lead vehicle ($V_l$) and ego vehicle ($V_f$), and speed limit $V_{max}(t)$. 
         \label{fig:Vr_vs_Time}]{\includegraphics{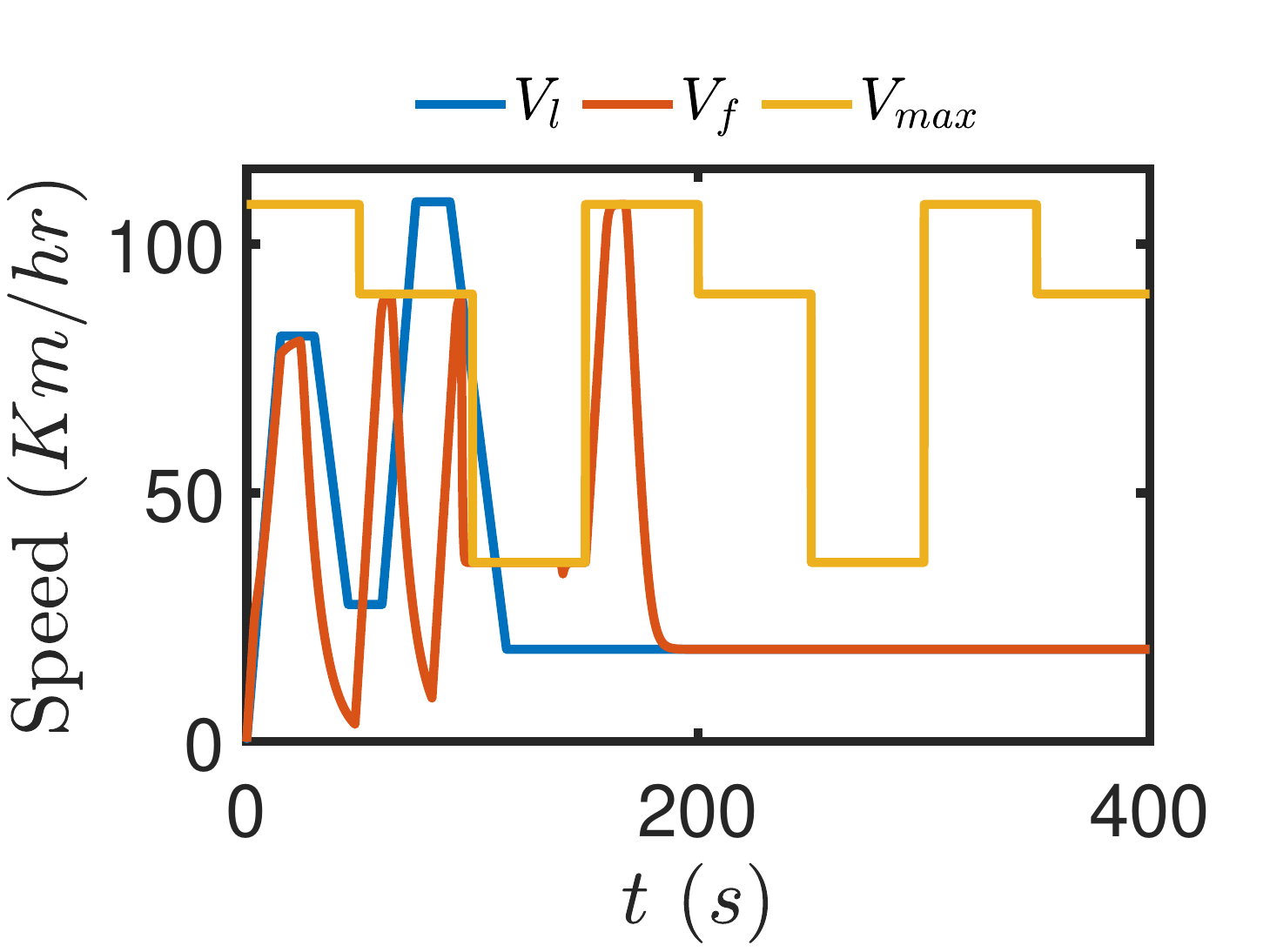}}
    \hfill
    \subfloat[
    \small
    $\mathcal{G}_{[0,\mathcal{N}_v)} h_{pos}(t,\mathbf{x})\geq 0$. The ego vehicle obeys the traffic signals. 
          \label{fig:hpos_vs_Time}]{\includegraphics{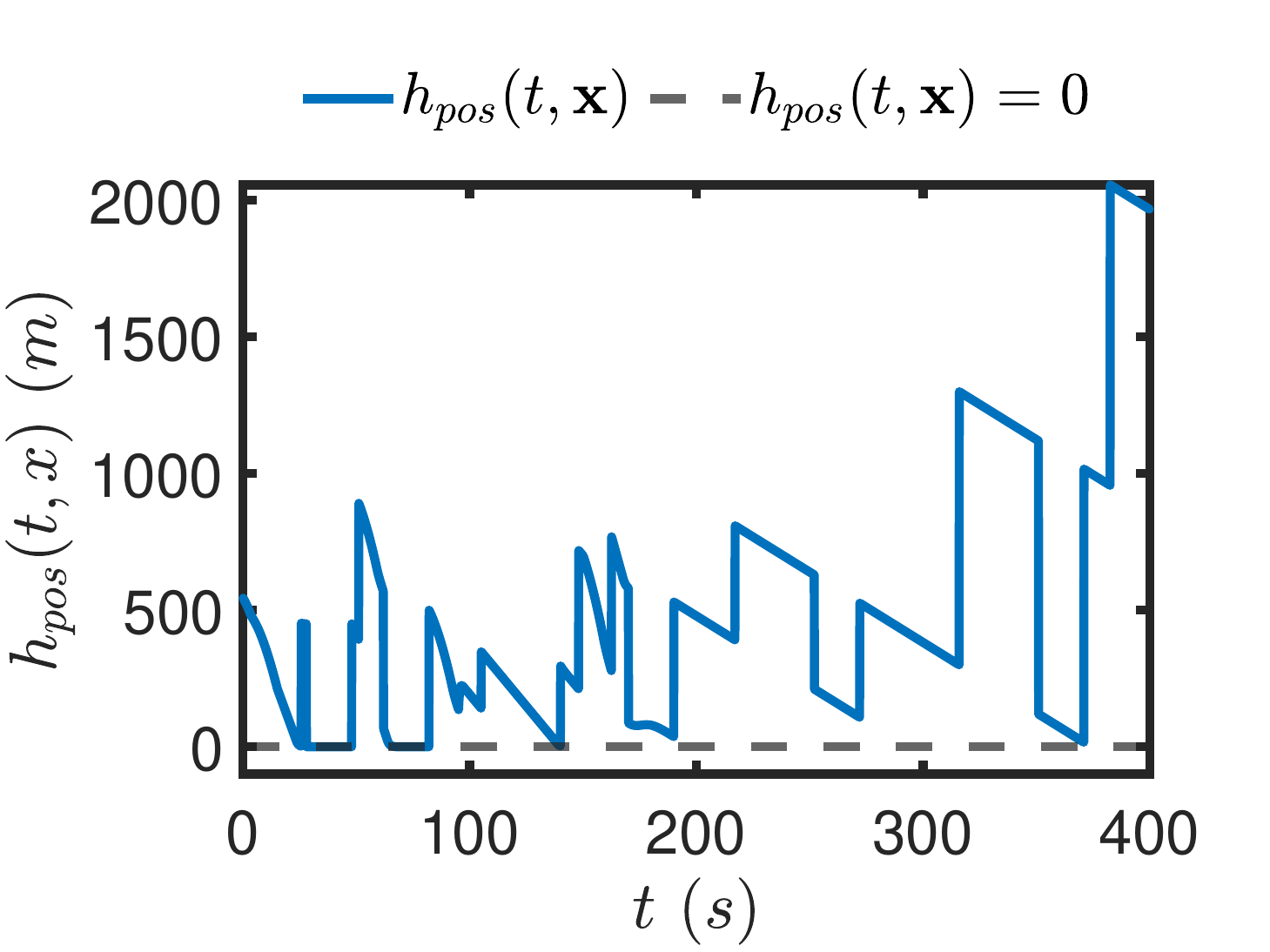}}
          \hfill
          \subfloat[\small Normalized input $u$.  
          \label{fig:U_vs_Time}]{\includegraphics{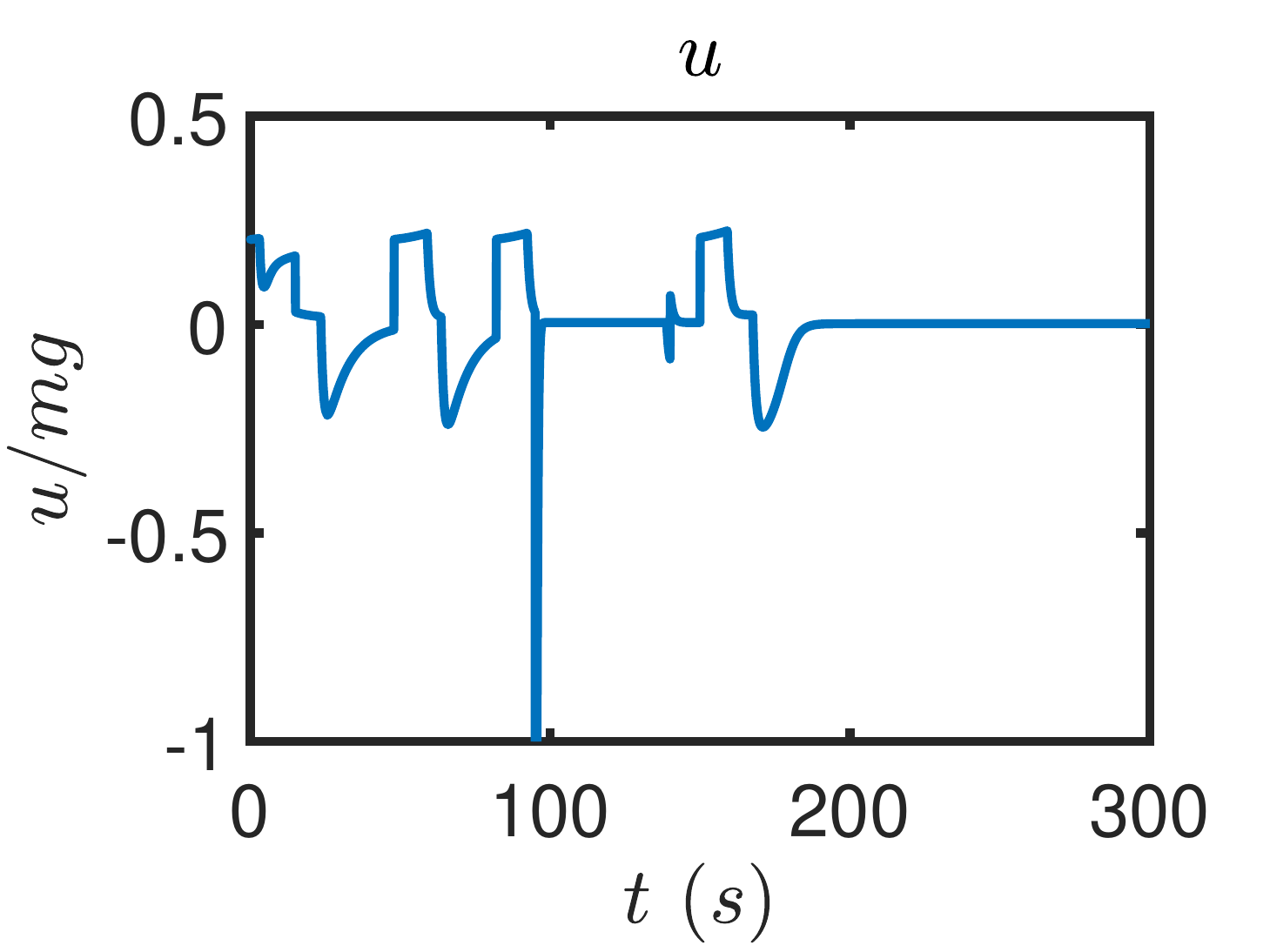}}
\caption{\small Simulation results: all safety and regulatory constraints are satisfied by the controlled vehicle.}
\label{fig:traffic_signals}
\end{figure*}

\begin{figure}[t]
        \centering \includegraphics[width=7cm]{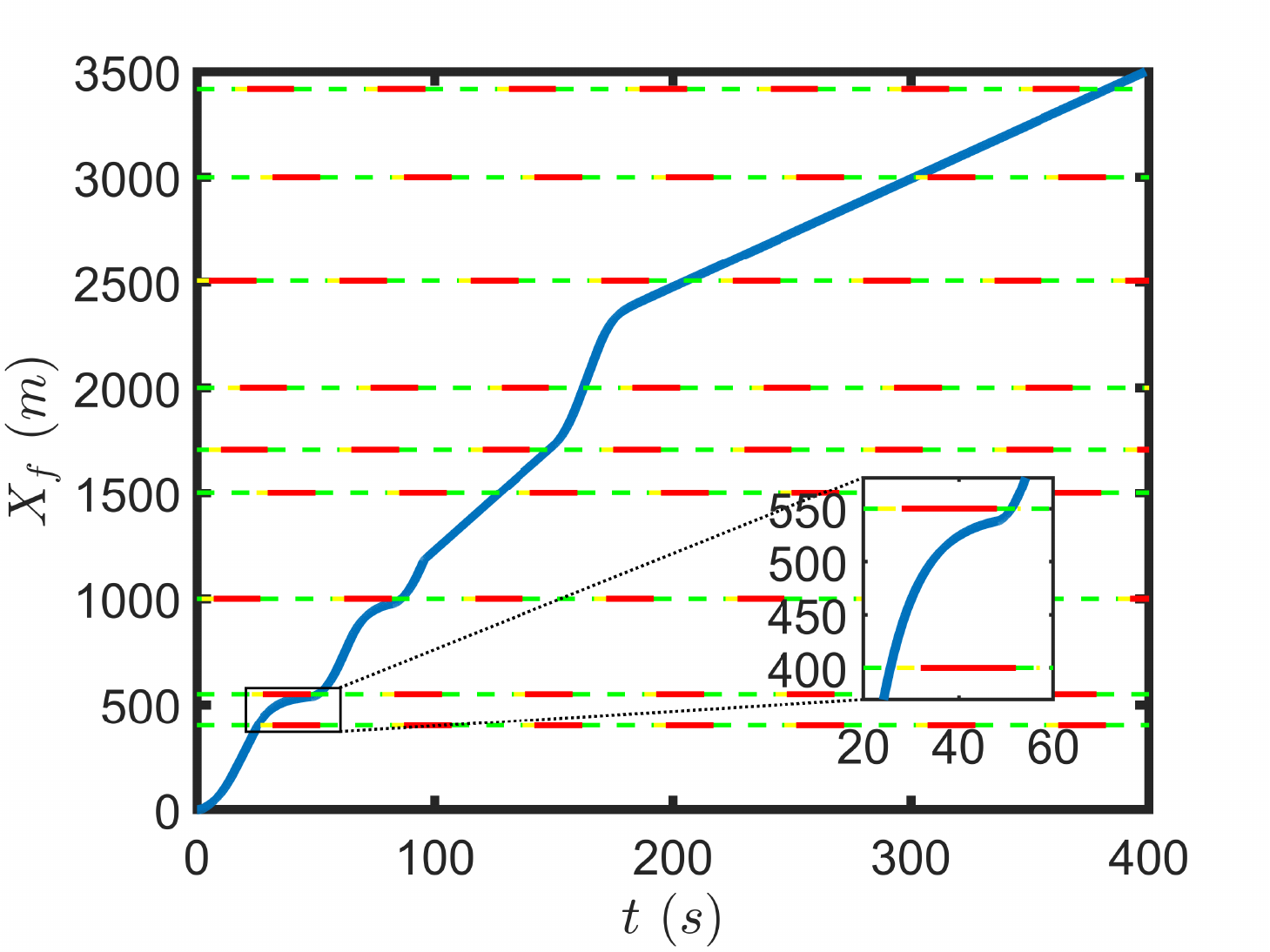}
         \caption{\small Position of the ego vehicle $X_f$ vs. time. The horizontal lines show the position of the traffic signals while the colors show the states at a given time.}
         \label{fig:Xf_vs_Time}
\end{figure}

\subsection{Control Synthesis} 
We derive the set of constraints that will be used to find the control law guaranteeing the satisfaction of contracts $\mathscr{C}_{1}$, $\mathscr{C}_{2}$, and $\mathscr{C}_{3}$.
By Definition~\ref{def:ControlBarrierFunctions} and the related treatment,
the set of inputs ensuring the 
satisfaction of $\mathscr{C}_{\mathtt{1}}$ is given by
$\mathcal{U}_{h_1}= \{u \in U: u \leq  \frac{ma_{max}}{ha_{max}+V_f} (h_1+V_r+\frac{V_l}{a_{max}}a_l)+F_r\}$, where $a_l$ is the acceleration of the lead vehicle received by V2V communication.

Let $t_{i-1,0}$ be the time at which the ego vehicle crosses the $(i-1)^{th}$ traffic signal, serving as the initial condition as the vehicle approaches the $i^{th}$ traffic signal. Let $\mathcal{U}_{g,i}(t)$ denote the set of safe inputs when $s_i = \mathtt{Green}$, i.e., when $t\in \Gamma_{ij,g} = \bigl[\max(g_{ij},t_{i-1,0}),y_{ij} \bigr)$, and $\mathcal{U}_{r,i}(t)$ the set of safe inputs  
when $s_i=\mathtt{Red}$, i.e., over $\Gamma_{ij,r}$. For all $j\in \mathbb{N}$, for all $t\in \Gamma_{ij,g}$, 
if $\max(y_{ij},t_{i-1,0})\leq t <r_{ij}$, according to Theorem \ref{them: piecewise TV-FCBF-1}, the set of safe inputs is  
\begin{align*}
\begin{split}
\mathcal{U}_{h_{pos,i}}(t) =& \begin{cases} \mathcal{U}_{h_{\bar{r},i}}(t) & g_{ij}\leq t \leq y_{ij} \\
\mathcal{U}_{h_{\bar{r},i}}(t) \cap \mathcal{U}_{h_{r,i}}(t) & y_{ij}< t<r_{ij}\\
\mathcal{U}_{h_{\bar{r},i}}(t) & r_{ij}< t<g_{i,j+1}\\
\end{cases} \\
\mathcal{U}_{h_{\bar{r},i}}(t)=& \{u\in \mathcal{U}: u \leq \frac{m}{\beta}(h_{\bar{r},i}(\mathbf{x})-V_f)+F_r \}\\
\mathcal{U}_{h_{r,i}}(t)=&\bigl\{u\in \mathcal{U}:u\leq \frac{m}{\beta}\bigl(\gamma_{r,j+1}\text{sign}(h_{r,i}(\mathbf{x}))\\&\cdot\vert h_{r,i}(\mathbf{x})\vert ^{\rho_{r,j+1}}\bigr)+F_r\bigr\}. 
\end{split}
\end{align*}
We have $\gamma_{r,j+1} = \frac{\lvert h_{r,i}(t_{sj},\mathbf{x}(t_{sj})) \rvert ^{1-\rho_{j+1}}}{(Y_{ij}) \cdot (1-\rho_{j+1})}$ for $t_{sj}= \max(y_{ij},t_{i-1,0})$. 
The set inputs $\mathcal{U}_{h_v}(t)$ ensuring the satisfaction of $\mathscr{C}_{h_v}$
can be finally calculated using Theorem \ref{them: piecewise TV-FCBF-1} $\forall j \in \mathcal{N}_v, \; \forall t\in \Gamma_{vj}$, leading to 
\begin{align*}
\begin{split}
\mathcal{U}_{h_v}(t) =& \begin{cases} \mathcal{U}_{h_v,j}(t) & t_{j-1}<t \leq t'_j \\
\mathcal{U}_{h_v,j}(t) \cap \mathcal{U}_{h_v,j+1}(t) & t'_j< t<t_{j}
\end{cases} \\
\mathcal{U}_{h_v,j}(t)=& \{u\in \mathcal{U}: u \leq \frac{m}{\beta}h_{v,j}+F_r\}\\
\mathcal{U}_{h_v,j+1}(t)=&\bigl\{u\in \mathcal{U}:u \leq \frac{m}{\beta}(\gamma_{v,j+1} \text{sign} (h_{v,j+1}) \\& \cdot\vert h_{v,j+1}\vert ^{\rho_{v,j+1}})+F_r\bigr\}. 
\end{split}
\end{align*}
We have $\gamma_{v,j+1} = \frac{\lvert h_{v,j+1}(t'_j,\mathbf{x}(t'_{j})) \rvert ^{1-\rho_{v,j+1}}}{(T_{conv,v}) \cdot (1-\rho_{v,j+1})}$, $t'_j=t_j-T_{conv,v}$%
which ensures that the system converges from $C_{v,j}$ to $C_{v,j+1}$ within $T_{conv,v}$. 

By combining all the constraints above, we get $\mathcal{U}_{safe}(t)=\mathcal{U}_{safe,i}(t)$ for $P_{i-1}<X_f\leq P_i$,  $\mathcal{U}_{safe,i}(t) = \mathcal{U}_{h_1} \cap \mathcal{U}_{h_{pos,i}}(t) \cap \mathcal{U}_{h_v}(t)$. A PID controller is chosen as a nominal controller $u_{nom}=m\left(k_1V_r+k_2\phi_1 + k_3 \int_{0}^{t} \phi_1 \,dx\right)+F_r$ to achieve the goal that $V_f \rightarrow V_l$ and $\phi_1 \rightarrow 0$, where  $k_1,k_2,$ and $k_3$ are the PID gains. The $u\in \mathcal{U}_{safe}$ is obtained by solving quadratic programs as in~\eqref{eq:u_safe_using_QP}.

\section{Simulations and Results \label{sec:simulations}}

We simulate a road scenario with the ego and lead vehicles. The traffic is regulated by ten traffic signals. The distance between any two consecutive traffic signals is not equal. The timing cycles of different traffic signals are also different. The speed limits changes every $50$~s, which is possibly unrealistic but helps validate our methodology in extreme conditions. The variable speed limit $V_{max}(t)$ has three values: $30\;m/s, 
25\;m/s$, 
and $10\;m/s
$. 
We consider the worst-case scenario in which the lead vehicle may violate the rules.
We use $g=9.8\;\frac{m}{s^2}$, $m=1650~Kg$, $c_0=0.1\:N$, $c_1= 5\: \frac{N}{m/s}$, $c_2=0.25\frac{N}{m/s^2}$, $a_{max}=0.4\mbox{g} \frac{m}{s^2}$, and $\rho_{r,j+1}=0.9$ and $\rho_{v,j+1}=0.91$ for all $j\in \mathbb{N}$.
$Y_{ij}$ is the convergence time for $\mathscr{C}_{3}$ and $T_{conv,v}=5$~s is the convergence time within each interval $\Gamma_{vj}$ in  $\mathtt{G}_{2}$.
%
Fig.~\ref{fig:h1_vs_Time} shows that $\mathcal{G}_{[0,t_{\mathcal{N}_v})}h_1(\mathbf{x}) \geq 0$ holds, i.e., the ego vehicle maintains a safe distance from the lead vehicle during the entire trip. 
The relative speed profile of the ego vehicle, the lead vehicle, and the variable speed limit $V_{max}(t)$ are summarized in Fig.~\ref{fig:Vr_vs_Time}, showing that $V_{f} \leq V_{max}(t)$ always holds. 
%
Finally, as shown in Fig.~\ref{fig:Xf_vs_Time}, the ego vehicle complies with the traffic signals. The height of the horizontal lines represents the relative position of the traffic signals with respect to the origin. 
The colors represent the states of the traffic signals with respect to time. The  signals are not synchronized. The ego vehicle never crosses a traffic signal when the state is $\mathtt{Red}$. Moreover, since $h_{pos}(t,\mathbf{x})$ is always non-negative in Fig.~\ref{fig:hpos_vs_Time}, the traffic signal contract is satisfied. 
Fig.~\ref{fig:U_vs_Time} provides the control input $u$ as a function of time.

\section{Conclusions}\label{sec:conclusions}

We presented a compositional control synthesis method mapping a mission-level STL specification to an aggregation of contracts defined via continuously differentiable time-varying control barrier functions. The barrier functions are used to constrain the lower-level control synthesis problem, which is solved via quadratic programming. We illustrated the effectiveness of the proposed algorithm on a case study motivated by vehicular mission planning under safety constraints as well as constraints imposed by traffic regulations. Future work includes investigating extensions of the approach to more expressive fragments of STL as well as multi-agent planning. 

\bibliographystyle{IEEEtran}
\bibliography{biblio}

\end{document}